\documentclass[letterpaper, 10 pt, conference]{ieeeconf}
\IEEEoverridecommandlockouts
\overrideIEEEmargins  
\usepackage{cite}
\usepackage{amsmath,amssymb,amsfonts}
\usepackage{graphicx}
\usepackage{textcomp}
\usepackage{bbm}
\usepackage{mathrsfs}
\usepackage{algorithm}
\usepackage{algpseudocode}
\usepackage[dvipsnames]{xcolor}
\usepackage{tikz}
\usepackage{tikz-network}
\usepackage{hyperref}

\usepackage{etoolbox}

\newtheorem{lemma}{Lemma}

\newtheorem{definition}{Definition}
\newtheorem{problem}{Problem}
\newtheorem{proposition}{Proposition}

\newcommand{\Nc}{\mathcal{N}}

\newcommand{\Ac}{\mathcal{A}}
\newcommand{\Bc}{\mathcal{B}}
\newcommand{\Cc}{\mathcal{C}}
\newcommand{\Dc}{\mathcal{D}}
\newcommand{\Ec}{\mathcal{E}}

\newcommand{\Hc}{\mathcal{H}}

\newcommand{\Mc}{\mathcal{M}}

\newcommand{\Ic}{\mathcal{I}}

\newcommand{\Lc}{\mathcal{L}}
\newcommand{\Tc}{\mathcal{T}}


\newcommand{\Ns}{\mathscr{N}}

\newcommand{\Rb}{\mathbb{R}}
\newcommand{\Cb}{\mathbb{C}}

\newcommand{\Nb}{\mathbb{N}}
\newcommand{\Bf}{\mathfrak{B}}





\newcommand{\tr}{{\rm tr}}

\newcommand{\Span}{{\rm span}}


\newcommand{\vect}{{\rm vec}}
\newcommand{\ceil}[1]{\lceil #1 \rceil}

\begin{document}

\title{\LARGE \bf Reconstructing Quantum States from Local Observation:\\ A Dynamical Viewpoint}

\author{{Marco Peruzzo, Tommaso Grigoletto, Francesco Ticozzi
\thanks{The authors are with the Department of Information Engineering, University of Padova, Italy. Emails: \texttt{marco.peruzzo@dei.unipd.it}, \texttt{tommaso.grigoletto@unipd.it}, \texttt{ticozzi@dei.unipd.it}.}
}}

\maketitle
\thispagestyle{empty}
\pagestyle{empty}

\begin{abstract} 
We analyze the problem of reconstructing an unknown quantum state of a multipartite system from repeated measurements of local observables. In particular, via a system-theoretic observability analysis, we show that, even when the initial state is not uniquely determined for a static system, this can be reconstructed if we leverage the system's dynamics.
The choice of dynamical generators and the effect of finite samples is discussed, along with an illustrative example.
\end{abstract}

\section{Introduction}
Reliably estimating quantum states and dynamics from data is a key task in the development of quantum information technologies \cite{paris-estimationbook,zorzi2014}. The task becomes daunting if the system complexity increases, as the size grows exponentially with the number of components. It is then natural to look at estimation protocols that use a limited number of data. In general,  a quantum state on a multipartite system is not uniquely determined by its
marginals (i.e. reduced states), albeit this is indeed possible for specific classes of states \cite{reconstructinglocal}.
The problem of what is uniquely determined by local data has been widely studied from different perspectives \cite{linden_almost_2002,diosi_three-party_2004,eisert_gaussian_2008,yuCompleteHierarchyPure2021,xin_quantum_2017,weisQuantumMarginalsFaces2023,karuvade2019uniquely}, but so far the effect of the dynamics on these reconstruction problems has not been fully assessed, with some ideas presented in \cite{baioPossibleTimeDependent2018}.

In this work, we show how, by exploiting the system dynamics, one can retrieve the global information missing from local data, by effectively spreading the reach of local observables. This is done by resorting to the tools of observability analysis for linear systems specialized to the quantum domain 
\cite{domenicobook,grigoletto_2023,grigoletto_CDC}, and the construction of an appropriate output map: in fact, we establish equivalence between the ability to reconstruct every system's state and an observability condition on the model. Furthermore, we show that if we have a parametric family of dynamical generators, the observability property is generic as long as the dependency on the parameters is analytic. The methods are illustrated in a paradigmatic 4-qubit system undergoing different types of dynamics. 
In \cite{petersen2024,dynamicaltomography} similar tools have been adopted to perform quantum tomography and to establish an upper limit on the total number of measurements needed to reconstruct the state.

\section{Notation and problem
definition}\label{sec:preliminaries}

In this paper, we consider a multipartite quantum system composed of $N$ finite-dimensional subsystems. The associated Hilbert space has the following tensor product structure: 
\begin{equation*}
\Hc=\bigotimes_{q=1}^{N}\Hc_q, \ \dim(\Hc_q)=d_q, \ \dim(\Hc)=\prod_{q=1}^Nd_q=D\leq \infty.
\end{equation*}
The space of (bounded) linear operators acting on $\Hc$ will be denoted by $\Bc(\Hc)$ and the set of density operators by $\Dc(\Hc)=\{\rho | \rho=\rho^{\dag}\geq 0, \tr(\rho)=1\}$. We further denote by $\rho\in \Dc(\Hc)$ the state of the system. 

Multipartite systems such as the one we consider are often subject to locality constraints which restrict the structure of the possible evolution maps as well as the possible measurements that can be performed on the system at hand. The notion of locality can be formalized as follows \cite{karuvade2019uniquely,Johnson_2017,ticozzi2019whole}:
\begin{definition}[Neighborhood]
    A neighborhood $\Nc_k$ is a collection of sub-system indexes denoted as $\Nc_k\subseteq \{1\dots N\}$. A neighborhood structure $\Nc=\{\Nc_k\}_{k=1}^M$ is a finite collection of neighborhoods. We say that a neighborhood structure $\Nc$ is covering if every subsystem $j$ belongs to at least one neighborhood, i.e. $\cup_k \Nc_k = \{1,\dots, N\}$ and that it is non-trivial if $\{1\dots N\}\notin\Nc$.
\end{definition}
A neighborhood structure is \textit{connected} if for any two subsystems $i$ and $j$ there exists an ordered sequence of indexes (a path) that starts with $i$ and ends with $j$ with subsequent indices that belong to the same neighborhood. 
The \textit{complement} $\overline{\Nc}_k$ of a neighborhood $\Nc_k$ is a set of sub-system indexes such that $\Nc_k\cup \overline{\Nc}_k=\{1\dots N\}$ and $\Nc_k\cap \overline{\Nc}_k=\emptyset$. 
In the following, we restrict our attention only to covering, nontrivial and connected neighborhood structures. An example is depicted in Figure \ref{fig:neigh_structure}.
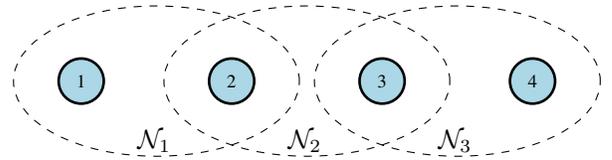
\begin{figure}[h!]
    \centering
    \begin{tikzpicture}
    \draw[dashed] (2,0) ellipse (1.9cm and 1cm);
    \draw[dashed] (4,0) ellipse (1.9cm and 1cm);
    \draw[dashed] (6,0) ellipse (1.9cm and 1cm);
    \node[text width=0.5cm] at (2,-0.8) {$\Nc_1$};
    \node[text width=0.5cm] at (4,-0.8) {$\Nc_2$};
    \node[text width=0.5cm] at (6,-0.8) {$\Nc_3$};
\Vertex[x=1,label=1]{1} \Vertex[x=3,label=2]{2}
\Vertex[x=5,label=3]{3}
\Vertex[x=7,label=4]{4}
\end{tikzpicture}
\caption{Example of a multipartite quantum system: a spin chain with 4 spins. 
The dashed lines highlight the considered neighborhood structure   $\Nc=\{ \Nc_1, \Nc_2, \Nc_3 \}$. 
}

\label{fig:neigh_structure}
\end{figure}

A neighborhood $\Nc_k$ defines a bipartition of the Hilbert space into 
\[\Hc = \Hc_{\Nc_k} \otimes \Hc_{\overline{\Nc_k}}\]
where $\Hc_{\Nc_k} = \bigotimes_{q\in\Nc_k} \Hc_q$ and $\Hc_{\overline{\Nc}_k} = \bigotimes_{q\in\overline{\Nc}_k} \Hc_q$. Operators that act trivially on the Hilbert space associated with the complement of a neighborhood $\Nc_k\in\Nc$ are said to be $\Nc$-\textit{local}, i.e. $X\otimes I_{\overline{\Nc_k}}$ with $X\in\Bc(\Hc_{\Nc_k})$ and $I_{\overline{\Nc_k}}\in\Bc(\Hc_{\overline{\Nc}_k})$. 

Given a quantum system and a neighborhood structure $\Nc$ we assume to be able to perform only measurements of $\Nc$-local observables on the system. These measurements allow us to compute the partial traces of the state $\rho\in\Dc(\Hc)$ on each neighborhood $\mathcal{N}_k\in \Nc$.
The set of all reduced density matrices associated with a neighborhood structure is defined as $\{\rho_{\mathcal{N}_k}=\tr_{\overline{\Nc}_k}(\rho)\}_{\Nc_k\in \Nc}$ where the super operators $\tr_{\overline{\Nc}_k}(\rho)$ indicate partial traces over $\Hc_{\overline{\Nc}_k}$.

The definitions of the marginal problem varies a little in the literature. We here follow the one given in \cite{baioPossibleTimeDependent2018}.
\begin{problem}[Quantum marginal problem (QMP)]\label{prb:QMP} 
Given a list of marginals $\{\rho_{\Nc_k}\}_{\Nc_k\in \Nc}$ characterize the set of states $\Mc_{\Nc}(\rho)\subset\Dc(\Hc)$ of the overall system that share this list of reduced density matrices, i.e. \[\rho_{\Nc_k} = \tr_{\overline{\Nc}_k}(\sigma), \quad \forall \sigma\in\Mc_\Nc(\rho). \]
\end{problem} 

While the reduced density matrices are uniquely determined given the state of the system, the converse is not true in general, hence the QMP for a certain list of marginals may admit multiple solutions. For this reason, the states that can be uniquely determined by their marginals are of particular interest \cite{karuvade2019uniquely}.
\begin{definition}[UDA States]\label{def:UDA}
A quantum state $\rho\in\Dc(\Hc)$ is uniquely determined among all states (UDA), with respect to a neighborhood structure $\Nc = \{\Nc_k\}_{k=1}^M$, if there exists no other state $\sigma \in \Dc(\Hc)$ with the same set of reduce density matrices, i.e. $\Mc_\Nc(\rho)\setminus\rho=\emptyset.$
\end{definition}
For simplicity, it is possible to represent the same information contained in the list of reduced density matrices $\{\rho_{\Nc_k}\}$ via a linear map $\Cc(\cdot)$ that projects the state $\rho$ on the subspace spanned by all the local operators associated to a neighborhood structure. In particular, consider an orthonormal Hermitian operator basis $\Bf_{\Nc_k} := \{E_{\Nc_k,j}\otimes I_{\overline{\Nc_k}}\}$ for the subspace $\Bc(\Hc_{\Nc_k})\otimes I_{\overline{\Nc_k}}$, i.e. the set of local operators associated with the neighborhood $\Nc_k$. Then we can define $\Cc[\rho(t)]=\sum_i C_i \tr(C_i \rho(t))$ where $\{C_i\}=\cup_k \Bf_{\Nc_k}$. Notice that this definition of the set $\{C_i\}$ implies that operators that act on the intersections between different neighborhoods are considered only once in the set, thus removing redundant information. 

Furthermore, if the neighborhood structure $\Nc$ is non-trivial then $\ker[\Cc] \neq \emptyset$: in fact, let $E_q$ be elements of $\Bc(\Hc_q)$ such that $\tr(E_q)=0$ 
and define $E:=E_1\otimes \dots \otimes E_N$ then $\Cc[E]=\sum_i C_i \tr(C_i E)=0$ since for any $C_i = X_{\Nc_k}\otimes I_{\overline{\Nc}_k}$ it holds $\tr(C_i E) = \tr(X_{\Nc_k}\bigotimes_{q\in\Nc_k}E_q)\prod_{q\in\overline{\Nc}_k}\tr(E_q) = 0$, i.e. $E$  is orthogonal to all the operators $C_i$. 

In practical applications, one is interested in uniquely reconstructing from locally constrained measurements the largest number of states of the system. This can be done for some classes of states (pure, maximally entangled) by properly choosing a covering neighborhood structure. However, any nontrivial neighborhood structure admits states that are not UDA.

\begin{lemma} \label{lem:non_triv_neigh}
    For any system with a non-trivial neighborhood structure, full-rank states are not UDA.
\end{lemma}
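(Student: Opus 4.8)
The plan is to produce, from the given full-rank $\rho$, a second state with the same marginals by perturbing $\rho$ along a Hermitian direction lying in $\ker[\Cc]$ — a direction that exists precisely because $\Nc$ is non-trivial. Concretely, for each subsystem $q$ I would pick a nonzero Hermitian traceless operator $E_q\in\Bc(\Hc_q)$ (possible since every $d_q\geq 2$) and set $E:=E_1\otimes\cdots\otimes E_N$. By exactly the computation already carried out just above the statement, $\Cc[E]=0$; moreover $E=E^\dagger$, $E\neq 0$, and $\tr(E)=\prod_q\tr(E_q)=0$.

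Next I would exploit full-rankness to retain positivity. Let $\lambda_{\min}>0$ be the smallest eigenvalue of $\rho$ (strictly positive exactly because $\rho$ has full rank), and fix any $\epsilon$ with $0<\epsilon<\lambda_{\min}/\norm{E}$, where $\norm{\cdot}$ is the operator norm. Define $\sigma:=\rho+\epsilon E$. Then $\sigma=\sigma^\dagger$, and $\tr(\sigma)=\tr(\rho)+\epsilon\tr(E)=1$; for any unit vector $\ket{\psi}$ one has $\bra{\psi}\sigma\ket{\psi}\geq\lambda_{\min}-\epsilon\norm{E}>0$, so $\sigma\geq 0$ and hence $\sigma\in\Dc(\Hc)$. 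Since $\epsilon\neq 0$ and $E\neq 0$, we have $\sigma\neq\rho$.

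Finally I would invoke linearity of $\Cc[\cdot]$: $\Cc[\sigma]=\Cc[\rho]+\epsilon\,\Cc[E]=\Cc[\rho]$. Because $\Cc[\cdot]$ encodes exactly the information in the list of reduced density matrices $\{\cdot_{\Nc_k}\}_{\Nc_k\in\Nc}$ (the coefficients $\tr(C_i\,\cdot)$ on $\Bf_{\Nc_k}$ determine $\tr_{\overline{\Nc}_k}(\cdot)$ and conversely), $\rho$ and $\sigma$ share every marginal, i.e. $\sigma\in\Mc_\Nc(\rho)\setminus\{\rho\}$. Hence $\rho$ is not UDA, which is the claim.

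The only steps requiring any care — and the nearest thing to an obstacle here — are the existence of the Hermitian, traceless, nonzero local factors $E_q$ (which would fail only for trivial one-dimensional subsystems, a case implicitly excluded) and the check that the perturbation does not destroy positivity; the latter is immediate from the strict positivity of $\lambda_{\min}$ for a full-rank $\rho$, and everything else is just linearity of $\Cc[\cdot]$ together with the kernel computation already established.
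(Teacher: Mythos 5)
Your proposal is correct and follows essentially the same route as the paper: construct a nonzero traceless Hermitian tensor product $E=E_1\otimes\cdots\otimes E_N$ lying in $\ker[\Cc]$ (the kernel computation already given in the text preceding the lemma) and perturb the full-rank state along it, using strict positivity of the smallest eigenvalue to keep $\rho+\epsilon E$ in $\Dc(\Hc)$. Your write-up is somewhat more careful than the paper's (explicit Hermiticity of $E$ and the explicit bound $\epsilon<\lambda_{\min}/\norm{E}$), but it is the same argument.
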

\begin{proof}
Since the neighborhood structure is nontrivial, 
$\ker[\Cc]\neq\emptyset.$ Given any $X\in\ker\Cc$ such that $\tr(X)=0$ and $\epsilon$ small enough so that, for any full-rank $\rho$, $\Bar{\rho}=\rho+\epsilon X$ is a state, i.e. $\Bar{\rho}\in \Dc(\Hc)$. This implies $\Cc[\Bar{\rho}]=\Cc[\rho]+\epsilon\Cc[ X]=\Cc[\rho]$ and therefore $\Bar{\rho}\in \Mc_\Nc(\rho)$. Then $\Mc_\Nc(\rho)\setminus \rho \neq \emptyset.$ 
\end{proof}

\section{The Role of the dynamics}
When the system of interest is in evolution the dynamics represents an additional degree of freedom that can be exploited in the solution of the QMP. As illustrated in the following section, adopting a dynamical viewpoint helps to significantly improve the set of states that can be uniquely determined given the marginals. In particular it becomes feasible to uniquely reconstruct all the states even for multipartite systems with non-trivial neighborhood structures.

In this work, we consider finite-dimensional Markovian, discrete-time quantum dynamics obtained by the discretization of continuous-time generators. Discrete-time quantum dynamics are described by a linear, completely positive (CP), and trace-preserving (TP) map $\Tc(\cdot)$ which guarantees that any element in $\Dc(\Hc)$ is mapped to another element in $\Dc(\Hc)$. Every CP map admits a representation in terms of an operator-sum (Kraus representation \cite{Nielsen2010}), i.e. $\Tc(\cdot)=\sum_k M_k \cdot M_k^\dag,$ where $ \{M_k\}\subset\Bc(\Hc)$, moreover the map is TP if and only if satisfies $\sum_k M_k^\dag M_k=I$.

A continuous semi-group of CPTP maps $\{\Tc_{t}\}_{t\geq 0}, \ \Tc_{0}=\Ic$ with the Markov composition property $\Tc_{t} \circ \Tc_{s}=\Tc_{t+s} \ \forall t,s\geq 0$ is called a Quantum Dynamical Semi-group (QDS) \cite{alicki2007semigroups}. Let $\Lc$ be the corresponding semi-group generator, i.e. $\Tc_{t}=e^{\Lc t}$. $\Lc$ can be expressed in Lindblad \cite{gks1979, lindblad1976} canonical form as
\begin{equation}\label{eq:limblad}
 \Lc(\cdot)=-i [H,\cdot] +\sum_k\left(L_{k} \cdot L_{k}^\dag-\frac{1}{2}\left\{L_{k}^\dag L_{k},\cdot\right\}\right),
\end{equation}
where $H=H^\dag$ is the time-invariant Hamiltonian of the system and the operators $L_{k}$, called noise operators, are the non-Hamiltonian components of the generators whose effect is to make the dynamics non-unitary and irreversible.

In the rest of the paper, we assume to have perfect knowledge of the dynamic, i.e. of the Hamiltonian $H$ and noise operators $L_k$.

As anticipated before, the generator $\Lc$ must account for the topological constraints imposed by the Neighborhood structure $\Nc$. To this aim, we introduce the definition of $\Nc$-local generator.
\begin{definition}\cite{johnson2016}
    A QDS generator $\Lc$ is $\Nc$-local (local with respect to a given neighborhood structure $\Nc$) if it may be expressed as:
    \begin{equation}
        \Lc=\sum_j \Lc_{j}, \ \  \Lc_{j}=\Lc_{\Nc_j} \otimes \Ic_{\overline{\Nc}_j}
    \end{equation}
    where $\Lc_{\Nc_j}$ is a Lindblad generator acting on $\Bc(\Hc_{\Nc_j})$ and $\Ic_{\overline{\Nc}_j}$ is the identity superoperator on $\Bc(\Hc_{\overline{\Nc_j}})$.
\end{definition}
The above definition is equivalent to requiring that the operators $\{L_{k}\}$ and $H$ of $\Lc$ can be written as
\begin{equation}
    L_{k}=L_{k\Nc_j}\otimes I_{\overline{\Nc_j}}, \ \ H=\sum_j H_{j}, \ \ H_{j}=H_{\Nc_j}\otimes I_{\overline{\Nc}_j},
\end{equation}
a Hamiltonian $H$ of this form is called a $\Nc$-Local Hamiltonian.
Throughout the rest of the paper, we will assume that the neighborhood structure for the generator $\Lc$ and the linear function $\Cc$ is the same. 

In addition to the free dynamics, we are interested in performing instantaneous measures on the system at different times $t\in\{k\Delta t\}, \ k\in\Nb_{\geq 0}$, where $\Delta t$ is the sampling time, to this aim it is convenient to consider a discretized version of the continuous time system defined above. 
The overall discrete-time dynamical systems reads:
\begin{equation}\label{eq:dis_sys}
    \Sigma:=\begin{cases}
     {\rho}[k+1]=\Ec ( {\rho}[k])\\
        \tau[k]=\Cc( {\rho}[k])
    \end{cases},
\end{equation}
where $\rho[k]:=\rho(k\Delta t)$, $\Cc$ is the output map which carries the information on the set of marginal of the state at time $t$, as defined before. $\Ec=\Tc_{\Delta t}=e^{\Lc \Delta t}$ is a CPTP map obtained by the discretization of the $\Nc$-local continuous dynamics. Such a model can be seen as an instance of a quantum hidden Markov model \cite{grigoletto_2023}.
An interesting question that arises when introducing dynamics into the quantum marginal problem is whether it is possible to reconstruct the initial state of the overall system, denoted as $ {\rho}(0)$.
 This leads to a dynamical version of Definition \ref{def:UDA}:
\begin{definition}
[UDDA states] \label{prb:DYNUD} A state $\rho\in \Dc(\Hc)$ is uniquely dynamically determined among all states (UDDA) if there does not exist any other state $\sigma\in \Dc(\Hc)$ such that $\Cc[\Ec ^k[ \rho]]=\Cc[\Ec ^k[ \sigma]] \ \forall k\in \Nb$.
\end{definition}
Our primary focus will be on establishing when the dynamics ensures that every state is UDDA from the marginals collected at multiple times.
To understand the solution to the aforementioned problem, we will leverage a well known tool, namely observability analysis.

By Definition \ref{prb:DYNUD} two initial states $ {\rho}_1,  {\rho}_2\in\Dc(\Hc)$ are dynamically indistinguishable (and thus non UDDA) if the corresponding outputs of the system are equal at all times. This is equivalent to the condition ${\rho}_1- {\rho}_2\in \ker{\Cc[{\Ec^k}]} \ \forall k\in \Nb$ by the linearity of the considered maps, which leads to the following definition.
\begin{definition}[Non-observable subspace]
The \text{non-observable} subspace for the system $\Sigma$ is the subspace
\end{definition}
\begin{equation}
\Ns :=\{ X \in \Bc(\Hc) \ | \ \Cc[\Ec ^k[ {X}]]=0, \forall k\in \Nb\}.
\end{equation}
By recalling that $\Cc[\rho]=\sum_i C_i\tr[C_i \rho]$, we can compute the orthogonal complement to $\Ns$ in $\Bc(\Hc)$, with respect to the standard Hilbert-Schmidt inner product:
\begin{equation}
\Ns ^\perp=span\{\Ec ^{k\dag}[C_i], \forall k\in\Nb, \  \forall i\}.
\end{equation}
We call $\Ns^\perp$ the \textit{observable subspace}.

Since the full operator space is of dimension $D^2$, there exists an integer value $k^*\leq D^2-1$ such that the observable subspace is completely characterized by considering the propagators only up to time $k^*$ \cite{kalman1969topics, wonham}, leading to
\begin{equation} \label{eq:obs_sub_char}
   \Ns ^\perp=span\{\Ec ^{k\dag}[C_i], \ k=0,1,\dots, k^*, \  \forall i\}.
\end{equation}
\begin{definition}
    The system is said to be observable if the  observable subspace dimension is $D^2$, i.e $\Ns^\perp=\Bc(\Hc)$.
\end{definition}

On the other hand, a lower-bound on the number of needed steps $k^*$ to check observability is also easily derived. If the system is observable and $m=|\{C_i\}|$, the minimum number of steps $k^*$ is such that $k^*\geq \ceil{D^2/m}$. In fact due to observability, through \eqref{eq:obs_sub_char} it must be possible to find $D^2$ independent generators of $\Ns$. However, if only $k\leq\ceil{D^2/m}-1$ steps are considered, the maximum number of linearly independent observables would be $m(\ceil{D^2/m}-1)<D^2.$ 

We next show that checking observability, which can be done systematically, is equivalent to check when every state is UDDA. 
\begin{proposition}
Every initial state $\rho(0)\in \Dc(\Hc)$ is UDDA if and only if the system \eqref{eq:dis_sys} is observable.
\end{proposition}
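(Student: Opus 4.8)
The plan is to translate the UDDA property into a statement about the non-observable subspace $\Ns$ and then exploit linearity of $\Cc$ and $\Ec$. By the discussion preceding Definition of the non-observable subspace, two states $\rho_1,\rho_2\in\Dc(\Hc)$ are dynamically indistinguishable precisely when $\rho_1-\rho_2\in\Ns$, so ``every state is UDDA'' is equivalent to ``$\Ns$ contains no difference of two distinct density operators.'' The claim then reduces to showing that $\Ns=\{0\}$ if and only if this last condition holds.

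\emph{($\Leftarrow$)} If the system is observable then $\Ns^\perp=\Bc(\Hc)$, hence $\Ns=\{0\}$. Given $\rho_1,\rho_2\in\Dc(\Hc)$ with $\Cc[\Ec^k[\rho_1]]=\Cc[\Ec^k[\rho_2]]$ for all $k\in\Nb$, linearity gives $\rho_1-\rho_2\in\Ns=\{0\}$, so $\rho_1=\rho_2$ and every state is UDDA.

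\emph{($\Rightarrow$)} I would argue by contraposition: assume the system is \emph{not} observable, i.e.\ $\Ns\neq\{0\}$, and exhibit two distinct density operators with identical output trajectories. Two structural facts drive the construction. First, since $\Ec$ is trace preserving its Hilbert--Schmidt adjoint $\Ec^\dagger$ is unital, and since each local basis $\Bf_{\Nc_k}$ spans $\Bc(\Hc_{\Nc_k})\otimes I_{\overline{\Nc_k}}\ni I$, the identity lies in $\mathrm{span}\{C_i\}=\mathrm{span}\{\Ec^{0\dagger}[C_i]\}\subseteq\Ns^\perp$; therefore every $X\in\Ns$ obeys $\tr(X)=\inner{I}{X}=0$. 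Second, each $C_i$ is Hermitian and $\Ec^{k\dagger}$ maps Hermitian operators to Hermitian operators (being the adjoint of a positive map), so $\Ns^\perp$, and hence its Hilbert--Schmidt orthogonal complement $\Ns$, is closed under the adjoint and thus admits a basis of Hermitian operators. Combining the two, choose a nonzero Hermitian, traceless $X\in\Ns$. Now set $\rho_2=I/D$ and $\rho_1=I/D+\epsilon X$: since $I/D$ is full rank and $X$ is bounded and Hermitian, for $\epsilon>0$ small enough $\rho_1\geq 0$; it is Hermitian with unit trace, hence $\rho_1\in\Dc(\Hc)$, and $\rho_1\neq\rho_2$. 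As $\rho_1-\rho_2=\epsilon X\in\Ns$, we get $\Cc[\Ec^k[\rho_1]]=\Cc[\Ec^k[\rho_2]]$ for all $k$, so $\rho_2$ is not UDDA, establishing the contrapositive.

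The step I expect to need the most care is precisely this last construction: a nontrivial non-observable subspace is a priori just an abstract linear subspace of $\Bc(\Hc)$, and one must confirm it actually meets the affine set of differences of valid density operators, i.e.\ that it contains a nonzero \emph{Hermitian traceless} element. This is where complete positivity and trace preservation of $\Ec$ genuinely enter (via unitality of $\Ec^\dagger$ and Hermiticity preservation); once those two properties are in hand, the remaining argument is routine linear algebra and a small-$\epsilon$ perturbation.
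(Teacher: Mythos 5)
Your proof is correct, and it diverges from the paper's in both directions in instructive ways. For the ``observable $\Rightarrow$ all states UDDA'' implication, the paper vectorizes the system, forms the observability matrix $\hat O$, invokes the Kalman rank condition, and exhibits the explicit reconstruction formula $r[0]=[\hat O^T\hat O]^{-1}\hat O^T[\,y[0]\dots y[D^2-1]\,]^T$; you instead argue abstractly that observability means $\Ns=\{0\}$ and that indistinguishable states differ by an element of $\Ns$, which is shorter and essentially definitional. What the paper's longer route buys is the constructive inversion formula, which it reuses later for actual state reconstruction; your route buys economy. For the converse, both you and the paper perturb a full-rank state by $\epsilon X$ with $X\in\Ns$ nonzero, traceless (and implicitly Hermitian), but the paper simply asserts that such an $X$ exists, whereas you actually justify it: $I\in\mathrm{span}\{C_i\}\subseteq\Ns^\perp$ forces every element of $\Ns$ to be traceless, and Hermiticity of the $C_i$ together with Hermiticity preservation of $\Ec^{k\dag}$ makes $\Ns$ closed under the adjoint, so it contains a nonzero Hermitian element. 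This is a genuine gap in the paper's write-up that your argument closes; without it, one cannot rule out a priori that a nontrivial $\Ns$ fails to meet the (real, traceless, Hermitian) affine directions along which density operators can actually be perturbed.
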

\begin{proof}
    We assume in this first part of the proof that the system is observable and we prove that this implies all the states are UDDA. To a $D\times D$ matrix $B$ it is possible to associate a $D^2$ dimensional vector $b={\rm vec}(B)$ by a linear transformation that stacks the columns of $B$ one below the other so that the $(i, j)$ entry of the matrix $B$ is the $(j-1)D + i$ entry of the vector $b$. $B$ can then be simply recovered from $b$ by properly arranging the entries of $b$ in a matrix. Some important properties of the aforementioned linear transformation are the following \cite{gilchrist2009vectorization}:
\begin{enumerate}
    \item Let $B,C$ be $D\times D$ dimensional matrices $\vect(ABC)=(C^T \otimes A) \vect(B),$ where $\otimes$ is the kronecker product;
    \item $\tr(A^\dag B)=\vect(A)^\dag \vect(B).$
\end{enumerate}
To the density operator $\rho[k]$ we thus associate a vector $r[k]=\vect(\rho[k])\in\Cb^{D^2}$ and to the output $\sigma[k]$ we associate the vector $y[k]=\vect(\tau[k])\in\Cb^{D^2}$.
By noting that 
\begin{subequations}
\begin{equation}\label{eq_vec_maps_a}
    \vect(\Ec[\rho[k]])=(\sum_k (M_k^\dag)^T \otimes M_k) r[k]=\hat{E} r[k],\\
\end{equation}
    \begin{equation}\label{eq_vec_maps_b}
        \vect(\Cc[\rho[k]])=\sum_i \vect(C_i)\vect(C_i)^\dag r[k]=\hat{C}r[k].
    \end{equation}
\end{subequations}
to the superoperator $\Ec$ we associate the matrix $\hat{E}$, to  $\Cc$ we associate $\hat{C}$. Therefore applying a superoperator to $\rho[k]$ is equivalent to pre-multiplying $r[k]$ by the associated matrix  \cite{amshallem2015approaches}.
Equation \eqref{eq:dis_sys} can be alternatively written as a linear system with state $r[k]$:
    \begin{equation}\label{eq:dis_sys_vect}
    \Sigma_v:=\begin{cases}
     r[k+1]=\hat{E}r[k]\\
        {y}[k]=\hat{C} r[k]
    \end{cases},
\end{equation}
where we set $\Delta t=1$. The outputs of the system at time instants $t=\{0,\dots,D^2-1\}$ are given by
\begin{equation}\label{eq:output}
        \begin{bmatrix}    
            {y}[0]  \dots {y}[D^2-1] 
        \end{bmatrix}^T=\hat{O}r[0],
\end{equation}
where \begin{equation}\label{eq:obs_matrix}
    \hat{O}=\begin{bmatrix}
        \hat{C} & \hat{C}\hat{E} & \dots &  \hat{C}\hat{E}^{D^2-1}
    \end{bmatrix}^T
\end{equation}
is called \textit{observability matrix}. The system $\Sigma_v$ is observable if and only $\Sigma$ is observable and it is a classical result that the system $\Sigma_v$ is observable if and only if $\hat{O}$ has rank $D^2$ (Kalman rank condition) \cite{kalman1969topics}.
If we pre-multiply both sides of equation \eqref{eq:output} by $\hat{O}^T$ the initial state of the system can be uniquely determined as 
\begin{equation}\label{eq:exact_reconstr}
    r[0]=[\hat{O}^T \hat{O}]^{-1} O^T\begin{bmatrix}
        {y}[0]  \dots {y}[D^2-1] 
    \end{bmatrix}^T,
\end{equation}
where $\hat{W}=[\hat{O}^T \hat{O}]$ is called observability gramian and it is a $D^2\times D^2$ matrix which is invertible since full rank as it has the same rank of $\hat{O}$. This proves the first implication.

Assume now the system is not observable. Similarly to the proof of Lemma \ref{lem:non_triv_neigh} given a full-rank state $\rho$ it is always possible to find $X\in\Ns$ such that $X\neq 0, \tr[X]=0$ and $\epsilon > 0$ so that $\Bar{\rho}={\rho+\epsilon X}$ is a state. This implies $\Cc(\Ec^k(\Bar{\rho}))=\Cc[\Ec^k(\rho)]+\epsilon \Cc[\Ec^k( X)]]=\Cc[\Ec^k(\rho)] \ \forall k$, hence $\rho$ and $\Bar{\rho}$ are dynamically indistinguishable and $\rho$ can not be UDDA.
\end{proof}
As a final remark, we report a more direct method to check observability, avoiding having to vectorize the system or obtain $\Ec$ from the Lindblad generator $\Lc$. Thanks to the following lemma the observability of the system can be checked directly using the Lindblad generator \cite{chen1995sampled}.
\begin{lemma}
    If every couple of distinct eigenvalues $\lambda_i, \lambda_j$ of $\Lc$ for which $\Re[\lambda_i]=\Re[\lambda_j]$ are such that $\Im[\lambda_i-\lambda_j]\neq 2 \pi s/\Delta t$ $\forall s \in \Nb$ then:
    \begin{equation*}
        \textrm{span}\{\mathcal{L}^{\dag j}{(C_i)}, \ \ \forall i, \forall j \in \{0,\dots,D^2-1\}\} = \Bc(\Hc)
    \end{equation*}
    implies $\Sigma$ observable.
\end{lemma}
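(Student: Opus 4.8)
The plan is to translate the statement into a Popov--Belevitch--Hautus (PBH) eigenvector test and then to use the non-pathological sampling hypothesis to identify the relevant spectral data of $\Ec = e^{\Lc\Delta t}$ with that of $\Lc$. Working on the $D^2$-dimensional complex space $\Bc(\Hc)$ with the Hilbert--Schmidt inner product, on which $\Lc,\Ec$ and $\Cc$ act linearly, I would first record that $\Cc=\Cc^\dagger$ with $\mathrm{range}(\Cc)=\mathrm{span}\{C_i\}$, so $\ker\Cc=(\mathrm{span}\{C_i\})^\perp$. Since $\Ns$ is, directly from its definition, the largest $\Ec$-invariant subspace contained in $\ker\Cc$, the system $\Sigma$ is observable ($\Ns=\{0\}$) if and only if $\Ec$ has no eigenvector in $\ker\Cc$: a nonzero $\Ec$-invariant $V\subseteq\ker\Cc$ contains an eigenvector of $\Ec|_V$, and conversely the span of such a vector is an $\Ec$-invariant subspace of $\ker\Cc$. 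Dually, $\mathrm{span}\{\Lc^{\dagger j}[C_i]\}$ is the smallest $\Lc^\dagger$-invariant subspace containing $\{C_i\}$, and its orthogonal complement (using that $W$ is $\Lc$-invariant iff $W^\perp$ is $\Lc^\dagger$-invariant) is the largest $\Lc$-invariant subspace of $\ker\Cc$; hence the hypothesis $\mathrm{span}\{\Lc^{\dagger j}[C_i]\}=\Bc(\Hc)$ is equivalent to ``$\Lc$ has no eigenvector in $\ker\Cc$''. It therefore suffices to show that, under the eigenvalue assumption, every eigenvector of $\Ec$ is an eigenvector of $\Lc$.

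For this I would use the generalized eigenspace (Jordan) decomposition $\Bc(\Hc)=\bigoplus_i V_i$ of $\Lc$, with $\spec(\Lc)=\{\lambda_i\}$ and $\Lc|_{V_i}=\lambda_i\,\id_{V_i}+N_i$, $N_i$ nilpotent. Each $V_i$ is $\Ec$-invariant ($\Ec$ being a power series in $\Lc$), and $\Ec|_{V_i}=e^{\lambda_i\Delta t}e^{N_i\Delta t}$ has the single eigenvalue $e^{\lambda_i\Delta t}$. The hypothesis says exactly that distinct $\lambda_i$ produce distinct $e^{\lambda_i\Delta t}$, because $e^{\lambda_i\Delta t}=e^{\lambda_j\Delta t}$ forces $\Re[\lambda_i]=\Re[\lambda_j]$ and $\Im[\lambda_i-\lambda_j]=2\pi s/\Delta t$ for some integer $s$. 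Consequently, for each eigenvalue $\mu$ of $\Ec$ there is a unique index $i_0$ with $\mu=e^{\lambda_{i_0}\Delta t}$, so $\ker(\Ec-\mu\,\Ic)\subseteq V_{i_0}$, and there $\Ec|_{V_{i_0}}-\mu\,\id=e^{\lambda_{i_0}\Delta t}\bigl(e^{N_{i_0}\Delta t}-\id\bigr)$. Writing $e^{N_{i_0}\Delta t}-\id=\Delta t\,N_{i_0}G$ with $G=\id+\tfrac{\Delta t}{2}N_{i_0}+\cdots$ unipotent (hence invertible and commuting with $N_{i_0}$) yields $\ker(\Ec-\mu\,\Ic)=\ker N_{i_0}=\ker(\Lc-\lambda_{i_0}\,\Ic)$: the eigenspaces of $\Ec$ and $\Lc$ coincide. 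Combined with the first paragraph, any eigenvector of $\Ec$ lying in $\ker\Cc$ would be an eigenvector of $\Lc$ lying in $\ker\Cc$, contradicting the hypothesis, so $\Sigma$ is observable.

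The routine ingredients are the PBH and Hilbert--Schmidt-duality bookkeeping, the elementary fact that $e^{N\Delta t}-\id$ and $N$ have the same kernel, and the Cayley--Hamilton remark that truncating the powers at $D^2-1$ in both spans is harmless. I expect the \emph{main obstacle} to be carrying out the first paragraph cleanly: correctly matching the ``adjoint'' span $\mathrm{span}\{\Lc^{\dagger j}[C_i]\}$ appearing in the hypothesis with honest continuous-time observability of the pair $(\Cc,\Lc)$, and keeping track of the fact that only the implication ``continuous-time observable $\Rightarrow$ discrete-time observable'' is needed here; the converse is also true, but is not required for the lemma.
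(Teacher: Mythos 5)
Your proof is correct, but there is nothing in the paper to compare it against: the authors do not prove this lemma at all, they simply invoke the classical non-pathological-sampling result from Chen's \emph{Linear System Theory and Design}. What you have written is a correct, self-contained proof of exactly that result, transplanted to the superoperator setting. The two halves are both sound: (i) the PBH-style reduction, in which $\Ns$ is the largest $\Ec$-invariant subspace of $\ker\Cc$ and the span condition on $\{\Lc^{\dagger j}[C_i]\}$ is (via Cayley--Hamilton and Hilbert--Schmidt duality, using $\ker\Cc=\mathrm{span}\{C_i\}^{\perp}$ because the vectorized $\hat C$ is positive semidefinite) equivalent to $\Lc$ having no eigenvector in $\ker\Cc$; and (ii) the spectral-mapping step, where the hypothesis $e^{\lambda_i\Delta t}\neq e^{\lambda_j\Delta t}$ for distinct $\lambda_i,\lambda_j$ forces the generalized eigenspaces of $\Ec=e^{\Lc\Delta t}$ to coincide with those of $\Lc$, and the factorization $e^{N\Delta t}-\id=\Delta t\,N G$ with $G$ unipotent shows the eigenspaces themselves coincide. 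Your reading of the eigenvalue condition (covering both signs of $s$ by symmetry in $i,j$, and noting $s=0$ is vacuous for distinct eigenvalues) is also the right one. The only editorial remark is that you prove only the implication the lemma asserts, which you correctly flag; the argument is the standard one and would serve as a legitimate proof were the authors to include one.
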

 The lemma suggest also that it is better to avoid  certain values of the sampling time in the measurement procedure. For these values the system is not observable even if the underlying continuous time dynamics is such. 
 
\section{Parametric Dynamics}
In practical application, one is often able to engineer different dynamics for the system of interest by acting on a finite set of parameters of the QDS generator given in equation \eqref{eq:limblad}. In order to dynamically uniquely determine all the states we need to choose the free parameters so that the system is observable. Accordingly, this section aims at giving a method for the selection of the parameters. 
In this section, we consider the system $\Sigma$, whose generator $\Lc$ depends analytically on a finite number of parameters $\alpha \in \Rb^K$. We label such generator as $\Lc_\alpha$ and the Hamiltonian and set of noise operators for the generator as $H_\alpha$ and  $\{L_{k,\alpha}\}$. The system selected by $\alpha$ will be denoted as $\Sigma_\alpha$, $\Sigma_{v,\alpha}$ will be its vectorized version and the non observable subspace for the system will be labeled as $\Ns_{\alpha}$.
We now introduce a lemma which will be exploited to prove the main result of this section.

Consider an $m\times n$ matrix $A_\alpha = [f_{jk}(\alpha)]$, with $f_{jk} : R^K \mapsto \Cb$, such that its real and imaginary parts $\Re(f_{jk}),\Im(f_{jk})$ are (real)-analytic, and let $\mathsf{r} =    {\rm max}_{\alpha\in\Cb^K} {\rm rank}(A_\alpha)$.  
We have the following lemma \cite{ticozzi2013steadystate}:
\begin{lemma}\label{lem:matrix_gen}
    The set $\Ac=\{\alpha\in\Rb^K | \rm{rank}(A_\alpha)<\mathsf{r}\}$ is such that $\mu(\Ac)=0$, where $\mu$ is the Lebesgue measure in $\Rb^K$.
\end{lemma}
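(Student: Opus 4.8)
The plan is to reduce the statement to the classical fact that a real-analytic function on a connected open set that is not identically zero has Lebesgue-null zero set. First I would observe that $\rank(A_\alpha) < \mathsf{r}$ holds exactly when all of the finitely many $\mathsf{r}\times\mathsf{r}$ minors of $A_\alpha$ vanish simultaneously. Write $f_{jk} = g_{jk} + i\,h_{jk}$ with $g_{jk}=\Re(f_{jk})$ and $h_{jk}=\Im(f_{jk})$ real-analytic on $\Rb^K$. Each $\mathsf{r}\times\mathsf{r}$ minor $p_\ell(\alpha)$ is a fixed integer-coefficient polynomial in the entries $f_{jk}(\alpha)$, so its real and imaginary parts are polynomials in the $g_{jk},h_{jk}$ and hence real-analytic on $\Rb^K$; consequently $q_\ell := |p_\ell|^2 = (\Re p_\ell)^2 + (\Im p_\ell)^2$ is real-analytic on $\Rb^K$, and
\[
  \Ac \;=\; \bigcap_{\ell}\,\{\alpha\in\Rb^K : q_\ell(\alpha)=0\}.
\]

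Next I would argue that at least one $q_\ell$ is not identically zero. By definition $\mathsf{r}$ is the maximal attained rank, so some minor, say $p_{\ell_0}$, is nonzero somewhere. Here one should check that the maximal rank over $\Rb^K$ equals that over the complexification: if $p_{\ell_0}$ vanished on all of $\Rb^K$, then, since the $f_{jk}$ extend holomorphically to a neighborhood of $\Rb^K$ in $\Cb^K$ and $\Rb^K$ is a totally real subset of full dimension, the identity theorem forces $p_{\ell_0}\equiv 0$ on $\Cb^K$ as well --- a contradiction. Hence there is $\alpha_0\in\Rb^K$ with $q_{\ell_0}(\alpha_0)>0$, so $q_{\ell_0}$ is a not-identically-zero real-analytic function on the connected set $\Rb^K$. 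By the standard result invoked in \cite{ticozzi2013steadystate} (whose one-variable form --- real-analytic and not identically zero implies isolated zeros --- upgrades to higher dimension by slicing along lines together with Fubini), the zero set of $q_{\ell_0}$ is Lebesgue-null. Since $\Ac\subseteq\{\alpha:q_{\ell_0}(\alpha)=0\}$ by the display above, $\mu(\Ac)=0$.

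The only delicate point is a bookkeeping one rather than a genuine obstacle: one must confirm that taking sums and products of $\Cb$-valued functions whose real and imaginary parts are real-analytic again yields functions with real-analytic real and imaginary parts, so that the $\mathsf{r}\times\mathsf{r}$ minors fall within the scope of the zero-set lemma. Everything else is either elementary linear algebra (the rank/minor characterization) or a direct citation (the measure of the zero set of a nontrivial real-analytic function), and the connectedness of $\Rb^K$ ensures that the null-set conclusion is global.
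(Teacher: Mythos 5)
Your proof is correct and follows the standard route that the paper itself only imports by citation from \cite{ticozzi2013steadystate}: reduce the rank-deficiency locus to the common zero set of the $\mathsf{r}\times\mathsf{r}$ minors, observe that these have real-analytic real and imaginary parts (so $|p_\ell|^2$ is real-analytic and real-valued), and invoke the fact that a not-identically-zero real-analytic function on the connected set $\Rb^K$ has a Lebesgue-null zero set. Your side remark about reconciling the paper's $\max_{\alpha\in\Cb^K}$ with attainment of $\mathsf{r}$ at a real point is well taken --- that is almost certainly a typo for $\max_{\alpha\in\Rb^K}$, under which reading a nonvanishing $\mathsf{r}\times\mathsf{r}$ minor at some real $\alpha_0$ exists by definition and your identity-theorem digression becomes unnecessary.
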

A selection criterion for the free parameters of the system is then given by the following proposition:
\begin{proposition} Let $H_\alpha$, $L_{k,\alpha} \ \forall k$  be matrices such that 
each of their entries has both real and imaginary parts which are analytic functions on the parameter $\alpha\in \Rb^K.$ If $\exists \ \hat{\alpha}$ such that the system $\Sigma_{\hat{\alpha}}$ is observable, then the set $\Ac=\{\alpha \in \Rb^K \ | \ \Ns_{\alpha}\neq\emptyset \}$ is such that $\mu(\Ac)=0$.
\end{proposition}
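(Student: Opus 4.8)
The plan is to cast the claim as a rank condition on an analytically parametrized matrix and then invoke Lemma~\ref{lem:matrix_gen}. Recall that, by the Kalman rank condition used above, $\Sigma_\alpha$ is observable (equivalently $\Ns_\alpha=\{0\}$, which is the only sensible reading of the stated condition $\Ns_\alpha\neq\emptyset$ since $\Ns_\alpha$ is a subspace) if and only if the observability matrix
\begin{equation*}
\hat{O}_\alpha=\begin{bmatrix}\hat{C}^T & (\hat{C}\hat{E}_\alpha)^T & \cdots & (\hat{C}\hat{E}_\alpha^{D^2-1})^T\end{bmatrix}^T
\end{equation*}
has rank $D^2$, where $\hat{E}_\alpha$ is the matrix associated to $\Ec_\alpha=e^{\Lc_\alpha\Delta t}$ and $\hat{C}$ is fixed (it depends only on $\Nc$, not on $\alpha$). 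Since $\hat{O}_\alpha$ has exactly $D^2$ columns, one always has $\mathrm{rank}(\hat{O}_\alpha)\le D^2$; hence the hypothesis that $\Sigma_{\hat\alpha}$ is observable for some $\hat\alpha\in\Rb^K$ forces $\mathsf{r}:=\max_{\alpha}\mathrm{rank}(\hat{O}_\alpha)=D^2$.

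Next I would check that $A_\alpha:=\hat{O}_\alpha$ meets the regularity hypothesis of Lemma~\ref{lem:matrix_gen}, i.e.\ that the real and imaginary parts of its entries are real-analytic in $\alpha$. The collection of matrix-valued maps on $\Rb^K$ with this property is closed under sums, products, Kronecker products, transposition and entrywise complex conjugation. Starting from $H_\alpha$ and $\{L_{k,\alpha}\}$, which lie in this collection by assumption, the vectorized generator
\begin{equation*}
\hat{L}_\alpha=-i\big(I\otimes H_\alpha-H_\alpha^T\otimes I\big)+\sum_k\Big(\bar{L}_{k,\alpha}\otimes L_{k,\alpha}-\tfrac12\, I\otimes L_{k,\alpha}^\dag L_{k,\alpha}-\tfrac12\,(L_{k,\alpha}^\dag L_{k,\alpha})^T\otimes I\Big)
\end{equation*}
is again in the collection; the propagator $\hat{E}_\alpha=e^{\hat{L}_\alpha\Delta t}$ inherits the property because the analytic entries of $\hat{L}_\alpha$ extend to holomorphic functions on a complex neighborhood of $\Rb^K$, the matrix exponential is an entire function of its argument, and a composition of holomorphic maps is holomorphic; finally each block $\hat{C}\hat{E}_\alpha^{\,j}$ is a product of matrices in the collection, so $\hat{O}_\alpha$ is as well.

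With $\mathsf{r}=D^2$ and the analyticity in hand, Lemma~\ref{lem:matrix_gen} applied to $A_\alpha=\hat{O}_\alpha$ gives $\mu\big(\{\alpha\in\Rb^K\mid\mathrm{rank}(\hat{O}_\alpha)<D^2\}\big)=0$; since $\mathrm{rank}(\hat{O}_\alpha)<D^2$ is exactly the condition $\Ns_\alpha\neq\{0\}$, i.e.\ $\alpha\in\Ac$, this is the claim. I expect the only delicate point to be the analyticity of $\hat{E}_\alpha$: one should be explicit that ``analytic'' in Lemma~\ref{lem:matrix_gen} refers to the real and imaginary parts separately, and that this structure survives the matrix exponential --- which is cleanest to argue by passing to the holomorphic extension in the complexified parameter, consistently with the fact that Lemma~\ref{lem:matrix_gen} already phrases $\mathsf{r}$ as a maximum over $\alpha\in\Cb^K$. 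Everything else is bookkeeping with the Kronecker identities recalled earlier.
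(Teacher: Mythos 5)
Your proposal is correct and follows essentially the same route as the paper: reduce to the Kalman rank condition on $\hat{O}_\alpha$, note that observability of $\Sigma_{\hat\alpha}$ forces $\max_\alpha\rank(\hat{O}_\alpha)=D^2$, verify analyticity of the entries, and invoke Lemma~\ref{lem:matrix_gen}. You are in fact more careful than the paper on the one nontrivial point --- that analyticity survives the matrix exponential $e^{\hat{L}_\alpha\Delta t}$ --- and your reading of the condition $\Ns_\alpha\neq\emptyset$ as $\Ns_\alpha\neq\{0\}$ is the intended one.
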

\begin{proof}
    As in the proof of the previous Proposition 1, we can always vectorize the system $\Sigma_{\hat{\alpha}}$ to obtain
    $\Sigma_{v,\hat\alpha}$. Since the system is observable, the observability
    matrix $\hat{O}_{\hat{\alpha}}$ has full rank which is equal to 
    $D^2$. The imaginary and real parts of $\hat{O}_{{\alpha}}$
    are analytic functions on the variable $\alpha$
    since they are obtained by the sum, multiplication, exponentiation of the
    entries of $H_\alpha$ and $L_{k,\alpha}$, these are all operations which
    preserve analyticity. The set $\{\alpha \in \Rb^K \ | \ {\rm rank}(\hat{O}_\alpha)<D^2\}$
    correspond exactly to the set $\Ac$ and the fact that $\mu(\Ac)=0$ follows
    from Lemma \ref{lem:matrix_gen}.
\end{proof}
This proposition suggest that we can arbitrarily set the parameter values. If the system is not observable and there exist a choice of parameters that makes it observable, by changing the values at random, we shall find a set of parameters which guarantees observability with probability one. 
Once we have found the parameters, equation \eqref{eq:exact_reconstr} allow us to reconstruct the 
initial state of the system provided we know exactly the outputs. As already mentioned $\rho[k]$ can then be simply recovered from $r[k]$ by properly arranging its entries in a matrix. This reconstruction method is however not reliable if we want to reconstruct the states form noisy data, collected from a real physical experiment.

\section{Reconstructing the state from measurements}

If we want to estimate  $\tau[k]=\sum_i C_i \tr(C_i \rho[k])$ which is the output of the map $\Cc$ at time $k\Delta t$ , we first need to estimate the quantities $\langle C_i[k] \rangle=\tr(C_i\rho[k]) \ \forall i$, i.e. the expected value of the observable $C_i$ with state $\rho[k]$. In order to compute the 
estimate $\hat{\tau}_i$ of $\langle C_i[k] \rangle$ we need to let the system evolve until time $t=k\Delta t$, perform the measurement of $C_i$, collect the outcome (one of the eigenvalues of $C_i$), and reset the experiment. This procedure needs to be repeated multiple times (say P times) to get sufficiently accurate estimates. Let $c_{i,j}[k]$ be the outcome of the experiment at time $k\Delta t$, then $\hat{\tau}_i[k]$ is taken as the empirical mean of the outcomes for the experiment that is  $\hat{\tau}_i[k]=(\frac{1}{P}\sum_{j=1}^P c_{i,j}[k])C_i$. The complete estimate of $\tau[k]$ will be then given by $\hat{\tau}[k]=\sum_{i}\hat{\tau}_i[k]$.
Clearly $\hat{\tau}_i[k]\rightarrow \langle C_i[k]\rangle C_i$ with probability 1 as $P\rightarrow\infty$.
The procedure to get the outputs from a physical system is summarized in Algorithm \ref{alg:time_trace_proc}. Assuming observability, there would be a unique state compatible with the outputs, if these were known exactly. If we take $y[k]=\vect(\hat{\tau}[k])$ the initial state would exactly be given by equation \eqref{eq:exact_reconstr}.   
However to get the initial state of the system from experimental data, the reconstruction method illustrated above is in general not always reliable. In fact, 1) accurate estimates $\hat{\tau}_i[k]$ of $\langle C_i[k]\rangle C_i$ are only given by averaging over a large quantity of trials, which is often unfeasible since the number of observable $C_i$ we need to measure grows rapidly with the number and the dimension of subsystems; 2) data can be subject to experimental noise, leading to significant errors; 
3) using experimental data directly in equation \eqref{eq:exact_reconstr}, there are no guarantees that the reconstructed state is a physical state i.e. $\hat{\rho}(0)\in \Dc(\Hc)$; 4) The precision of the solution $\hat{\rho}(0)$ is not only affected by noisy data but also by errors that can be generated when solving \eqref{eq:exact_reconstr}.

When we do not have exact knowledge of the outputs, alternative reconstruction methods are available. In particular it is possible to set up an optimization problem to search the initial state $\hat{\rho}(0)$ which satisfies all the constraints imposed by the set of physical states and whose outputs matches the set of estimated $\{\hat{\tau}(t)\}$. State reconstruction problems from noisy data are often formulated as \textit{maximum-entropy} problems or \textit{maximum-likelihood} problems \cite{parisML, zorzi2014}. In this work we consider the maximum-entropy approach (the other case can be treated in an analogous fashion), therefore we take as cost function in the optimization problem the entropy for the state $\rho\in\Dc(\Hc)$ defined as $S(\rho)=-\tr(\rho  \log(\rho)).$  By maximizing it  we obtain the most mixed state among the ones which satisfies the imposed constraints. The considered optimization problem reads: 
\begin{equation}
\begin{split}
    \hat{\rho}(0)&={\rm argmax} \ S(\rho)\\
    s.t. & \ 
    \rho \in \Dc(\Hc)\\
    & \ \Cc(\Ec^k(\rho))=\hat{\tau}[k] \ \forall k \in  \{0\dots k^*\}
\end{split}
\end{equation}
If a prior information $\sigma$ on the initial state is available, it is possible to include it in the optimization problem by considering as cost function the (Umegaki’s) quantum relative entropy $S(\rho || \sigma)$ between $\rho\in\Dc(\Hc)$ and $\sigma\in\Dc(\Hc)$, defined as $S(\rho || \sigma)=-\tr(\rho \log \rho - \sigma \log \sigma)$.

The optimization problem can be solved with the approach presented in \cite{zorzi2014}, where the feasibility of the problem is also addressed: if no solution can be found from the data, the imposed constraints are relaxed and the solution to the optimization problem with the new constraints is found. With this method the reconstructed initial state is ensured to be a physical state even in the presence of noisy data.

   \begin{algorithm}
    \caption{Acquisition of the outputs}\label{alg:time_trace_proc}
    \textbf{Require:} The physical system of interest\\
    \textbf{Output:} The list of estimated outputs $\{\hat{\tau}[k]\}$
    \begin{algorithmic}[1]
    
   \State Prepare the experiment
    \While{$k\leq k^*$}
        \While{$i\leq |\{C_i\}|$}
            \While{$j\leq P$}
                \State Reset the experiment
                \State Let the system evolve until time $t=k\Delta t$
                \State Measure $C_i$ and collect the outcome $c_{i,j}[k]$
            \EndWhile
            \State $\hat{\tau}_i(t)=(\frac{1}{P}\sum_{j=1}^P c_{i,j}[k])C_i$
        \EndWhile
        \State $\hat{\tau}[k]=\sum_{i}\hat{\tau}_i[k]$
    \EndWhile
    \end{algorithmic}
    \end{algorithm}

\section{Example: A 4 Qubit System}
In this section we highlight the effectiveness of the dynamical viewpoint for the solution of the marginal problem through some examples.
We consider a multipartite quantum system composed of 4 qubits disposed on a line with nearest-neighbor interactions, $\Hc=\bigotimes_{q=1}^4\Hc_q, \ \Hc_q=\Cb^2, \ \Hc \simeq \Cb^{16}$. The system is depicted in figure \ref{fig:neigh_structure}. The notation $\sigma_j^\alpha$  will  denote the Pauli operator $\sigma^\alpha, \alpha \in \{0,x,y,z\}$ acting on the j-th qubit, that is, $\sigma_j^\alpha \equiv I \otimes \cdots \sigma^\alpha \otimes \cdots I$ and similarly for ladder operators. Several evolutions for the system of interest have been considered,  and the observability analysis for each dynamics have been performed numerically. The software was used to find the matrices associated to the evolution and output maps according to equations \eqref{eq_vec_maps_a} and \eqref{eq_vec_maps_b}, then the observability matrix \eqref{eq:obs_matrix} was  computed and the Kalman rank condition was used to asses if the system is observable and hence if all the states are UDDA. 
First, we introduce a purely unitary evolution, where the Hamiltonian of the system describing interactions between adjacent spins is 
\begin{equation}
    H=\sum_{i=1}^4 \alpha_i\sigma_i^x + \beta_i \sigma_i^y+ \gamma_i \sigma_i^z + \sum_{i=1}^3 \delta_i \sigma_i^x\sigma_{i+1}^x+\epsilon_i \sigma_i^z\sigma_{i+1}^z
\end{equation}
and the coefficients $\alpha_i,\beta_i,\gamma_i, \delta_i, \epsilon_i \in \Rb$ can be chosen.

\noindent {\bf Case 1:} When all the coefficient are set to 1 the system is not observable, in particular the non observable subspace has dimension 5 and it is spanned by the generators:
\begin{equation}
\begin{split}
    \Ns = \Span\{I \otimes \sigma_x \otimes I \otimes \sigma_x, &  \ \ \ I \otimes\sigma_x \otimes I \otimes \sigma_y, \\ 
    I\otimes \sigma_x \otimes I \otimes \sigma_z,
 & \ \ \ I \otimes \sigma_x \otimes \sigma_x \otimes \sigma_x, \\  
    I \otimes \sigma_x \otimes \sigma_x \otimes \sigma_y. & \ \ \ \}
\end{split}   
\end{equation}
If we follow the results of proposition \ref{lem:matrix_gen}, by taking {\em all} the parameters at random we would be able to assess if this non-observability is generic or not. In fact, it is sufficient to take $\gamma_4$ as a sample drawn from a Gaussian distribution with mean 0 and variance 1, and we leave all the other parameters unchanged: for almost all the samples the system becomes observable. Let $\hat{O}^k=[\hat{C}\ \hat{C}\hat{E}\ \dots \ \hat{C}\hat{E}^{k} ]^T$, since $\textrm{rank}[O^{k^*} ]=\textrm{rank}[O^{k} ] \ \forall k\geq{k^*}$, $k^*$ is the the minimum $k$ such that $rank[O^k]=rank[O^{k+1}]$. 30 realizations have been taken and the mean number of steps $k^*$ necessary to have observability resulted to be 15. 

{\bf Case 2:} We consider now a second possible dynamics for the system which encompasses dissipative terms. We choose a local Lindblad generator with noise operators:
\begin{equation}
    L_i=\eta_i \sigma_i^+ \ \ i\in \{1\dots4\}.
\end{equation} 
The Hamiltonian part of the dynamics is the same as before with all the coefficients set to 1.
It is sufficient to consider $\eta_i=1 \ \forall i$ and the system is observable, the number of steps necessary to have observability is $k^*=14$.

In both these examples, $k^*$ is much lower than the maximum number of steps required by the observability analysis that is $D^2-1=255$. Through a preliminary observability analysis, it is hence possible to reduce the number of measurement necessary to collect data from a physical system and speed up the procedure outlined in algorithm  \ref{alg:time_trace_proc}.

{\bf Case 3:} As a last example we relax all the assumptions on having a covering neighborhood structure for the output map and we explore what happens if we perform observations only on a single neighborhood, for instance $\Nc_2$. We consider the same dynamics of the previous example where we draw independently all the coefficients of the model from a Gaussian distribution with mean zero and variance 1. The system is still observable, the number of steps necessary to have complete observability increases with respect to the previous examples ($k^*=29$) however it is still much lower with respect to the maximum number of steps. Moreover if we experimentally collect data from a single neighborhood the number of observable that needs to be measured at each time decreases. By optimizing the selection and the trade-off between number of observables and time required to collect data it is hence possible to significantly speed up the data collection procedure. This highlights the potential of the dynamical viewpoint to the quantum marginal problem with respect to the standard approach.

\section{Conclusions}
In this work we establish a connection between dynamical observability and the problem of reconstructing  the state of a multipartite system from its marginals, highlighting how exploiting the dynamics allows one to complement the missing information by effectively spreading the available measurements reach to the whole operator space. A formula to reconstruct the state from perfect data can be derived from standard system-theoretic considerations; in the case of imperfect data, due for example to finite sample size, solving the reconstruction problem is equivalent to solving a variational tomography problem. The genericity of the observability character within a class of parametric models is established, ensuring that randomized parameter choices allow one to assess observability with probability one.
The results are tested on a prototypical example, showing how one can trade the number of needed measurements with the number of samples needed for each trajectory. Further development of the study entails more general dynamics and measurement sets, imperfect knowledge of the dynamics, as well as methods to optimize the choice of parameters in a family of dynamical generators in order to obtain observability in the fastest possible way. 
\section*{Acknowledgment}
The authors acknowledge partial funding from the European Union - NextGenerationEU, within the National Center for HPC, Big Data and
Quantum Computing (Project No. CN00000013, CN 1, Spoke 10) and from the European Union’s Horizon Europe research and innovation programme under the project
“Quantum Secure Networks Partnership” (QSNP, grant agreement No 101114043).
\bibliographystyle{IEEEtran}
\bibliography{bibliography}
\end{document}